\documentclass{article}
\usepackage[utf8]{inputenc}
\usepackage{bbold, mathtools, amsthm, setspace}
\usepackage[colorlinks,hypertexnames=true,citecolor=blue]{hyperref} 
\usepackage{natbib}
\usepackage{tikz}
\usepackage{pgfplots}
\usepackage{appendix}
\bibliographystyle{agsm}
\setcitestyle{authoryear,open={(},close={)}}
\newtheorem{theorem}{Theorem}
\setlength{\parskip}{1 em}
\newcommand{\V}{\mathbb{V}}
\newcommand{\E}{\mathbb{E}}
\newcommand{\R}{\mathbb{R}}

\title{A re-examination of the foundations of the\\ cost of capital for regulatory purposes}
\author{Darryl Biggar\footnote{Adjunct Associate Professor, Monash University.}}
\date{March 2023}

\begin{document}

\maketitle

\begin{abstract}
\noindent
In regulatory proceedings, few issues are more hotly debated than the cost of capital. This article formalises the theoretical foundation of cost of capital estimation for regulatory purposes. Several common regulatory practices lack a solid foundation in the theory.  For example, the common practice of estimating a single cost of capital for the regulated firm suffers from a circularity problem, especially in the context of a multi-year regulatory period. In addition, the relevant cost of debt cannot be estimated using the yield-to-maturity on a corporate bond. We suggest possible directions for reform of cost of capital practices in regulatory proceedings.
%

\end{abstract}

\section{Introduction}


In almost all public utility rate cases, the regulator is required to determine a parameter known as the cost of capital. Debates over this parameter are routinely amongst the most contentious issues in regulatory proceedings. Corporate finance theory provides some guidance over these debates, but much of that theory was developed for other purposes and is not always relevant or directly applicable to regulatory proceedings. This article takes a fresh look at cost of capital theory to explore whether we can strengthen the theoretical foundation of cost of capital for regulatory practices. We find that several practices that are common in regulatory proceedings do not have a sound foundation in the underlying theory.

In practice, in the estimation of cost of capital, heuristics, assumptions and approximations are common. In many cost-of-capital applications, especially in corporate finance, these can be useful and reasonable. However, in the context of litigious regulatory proceedings, regulators and courts need a basis on which to stand when choosing between competing theories, methodologies and approaches. This article starts from the perspective that in making regulatory decisions, courts and regulators should rely -- as far as possible -- on the underlying theory. Without a theoretical foundation on which to stand regulators and courts cannot make reasoned and rational decisions whether to prefer one approach to estimating the cost of capital over another, or even determine whether the question or dispute in from of them makes sense.

As we will see, this re-examination of the underlying theory raises questions about several common practices in cost of capital estimation for regulatory purposes. Wherever possible I propose alternative approaches which are more consistent with the underlying theory. However those alternative approaches may require access to information that courts or regulators find difficult to obtain. These issues will need to be worked through in regulatory practice. At this stage this article seeks to identify potential problems and to suggest potential solutions.

This article does not assume any particular methodology for the estimation of the cost of capital -- whether that is the Capital Asset Pricing Model (CAPM) (or its variants), Arbitrage Pricing Theory, or the Fama-French 3-factor model.\footnote{See the Wikipedia entry on Asset Pricing.} Rather, this article uses the concept of the `valuation functional' or `valuation operator' which is independent of any particular approach to estimating the cost of capital. The results set out here are therefore independent of, and more fundamental, than a particular methodology for valuing a cash-flow stream. We set out the properties of the value functional in section \ref{sec:prelim} below.

Cost of capital is important in regulatory processes for one central reason: In all cost-based regulatory processes a key goal is to set prices or allowed revenue which allow the regulated firm to `cover its costs' and no more.\footnote{\cite{joskow2007regulation}: ``The basic legal principle that governs price regulation in the U.S. is that regulated prices must be set at levels that give the regulated firm a reasonable opportunity to recover the costs of the investments it makes efficiently to meet its service obligations and no more than is necessary to do so.''} This is expressed formally in the objective that the regulated firm's cash-flow stream should be set to achieve a Net Present Value (NPV) of zero. In a context in which the recovery of investment costs is spread out over time, achieving the objective of NPV=0 requires an estimation of the relevant discount rate or cost of capital for each cash-flow. Errors in estimation of the cost of capital translate into errors in the NPV. Increased uncertainty in cost of capital estimation therefore results in either (a) a greater risk of under-investment or default by the regulated firm; or (b) the regulator increasing the size of the `buffer' in the allowed revenues to ensure that the regulated firm does not under-invest or default. In principle, improving the quality of cost of capital estimation would allow the regulator to lower prices to customers and/or reduce the risk to the regulated firm of under-investment or default.

There are many different ways to regulate to achieve NPV=0. The precise definition of the cost of capital in a regulatory process depends very strongly on the precise approach to regulation. It is therefore important to carefully formalise the statement of the regulatory process, which we do below.

Current regulatory practice in Australia and New Zealand can be summarised as follows: In the typical case of a regulatory period of five years, a single cost of capital parameter is estimated which applies to the entire five year period. Starting from the opening regulatory asset base, the cost of capital (together with estimates of opex, capex, and the choice of depreciation) is used to determine the allowed revenues for each year of the regulatory period, and the closing asset base.

The cost of capital parameter in this process is typically estimated using the yield on a long-term government bond (typically with a term of ten years), plus a premium for risk. The risk premium is estimated as a weighted average of the risk premium for equity and the risk premium for debt. The risk premium for equity is typically estimated using a version of the CAPM.
The risk premium for debt is typically estimated in one of two ways: (a) as the risk premium on a representative corporate bond of the correct term and credit rating; or (b) as a trailing average of a representative portfolio of corporate bonds over given term and credit rating. Although the details vary, this broad approach is consistent with a number of other regulatory jurisdictions.

Within this framework we would like to give formal answers to the following questions:
\begin{itemize}
    \item Should the cost of capital for a regulated firm vary with the level of the regulatory asset base or the level of the allowed revenues, holding all else constant? Does this give rise to a problem of circularity?
    \item In the context of a multi-year regulatory period what is the right `term' of the cost of capital? Is the right term a `one year' rate, a term equal to the length of the regulatory period, a longer term, or something else entirely?
    
    \item If we estimate the cost of capital for the regulated firm as a weighted average of the cost of equity and the cost of debt, what is the right approach to estimating the cost of debt? Can we estimate the cost of debt by observing the appropriate yield-to-maturity on a corporate bond of the right term? If so, what is the right term?

    \item Is it possible to express the single cost-of-capital for a multi-year regulatory period as a weighted average of the cost of equity and the cost of debt?
\end{itemize}

The key conclusions of this article may be summarised as follows:
\begin{itemize}

\item In common regulatory practice the regulator estimates a single cost of capital for the combined cash-flow of the regulated firm (the sum of the allowed cash-flow of the firm and the closing regulatory asset base). But this cost of capital depends, itself, on the level of the allowed cash-flow of the firm. This gives rise to a problem of circularity: A choice of the cost of capital affects the allowed cash-flow, which affects the cost of capital. At a minimum, this complicates the task of estimating the correct cost of capital for the firm. This problem can be resolved by estimating a separate cost of capital for the one-period cash-flow and for the closing asset base of the regulated firm.

\item Common regulatory practice sets a single aggregate cost of capital parameter for the entire regulatory period. This single cost of capital parameter is a complex polynomial of the individual component costs-of-capital and, because it depends on the expected values of the individual cash-flows, it once again suffers from the problem of circularity.

In the more general case where the regulatory period lasts $T$ years, there are potentially up to $T+1$ cost-of-capital-related-parameters that the regulator must estimate in regulatory proceedings -- one for each of the single-period cash-flows during the regulatory period and one for the closing asset base.


\item In common regulatory practice, the cost of capital for the firm as a whole is divided into a separate cost of equity and a cost of debt. However, we demonstrate that the cost of capital(s) for the debt payment stream typically cannot be observed directly from observations of the current yield-to-maturity on bonds traded in the market. In particular, neither the `on the day' approach (that is, the yield to maturity on a zero-coupon bond with a term equal to the length of the regulatory period) nor the `trailing average' approach (that is, the weighted average yield to maturity on the portfolio of bonds sold by the regulated firm) yields the correct cost of capital parameter for the payments to debtholders.


\item In the context of a multi-year regulatory period, the cost of capital parameter cannot be expressed as a weighted average of a cost of capital for equity and a cost of capital for debt. This calls into question the value of estimating a Weighted-Average Cost of Capital for regulatory purposes.

\end{itemize}
There have been surprisingly few papers dealing with cost of capital issues in a regulatory context in the last fifty years. Following a key article by \cite{myers1972application}, the use of the Capital Asset Pricing Model (CAPM) became widespread in public utility regulatory practice.\footnote{This adoption of the CAPM was not without debate. \cite{breen1972use} question the utility of the CAPM model in regulatory processes, prompting the response \cite{myers1972use}. See also \cite{brigham1977use} and \cite{peseau1978use}. \cite{litzenberger1980capm} provides a useful summary of the different forms of CAPM and some early issues that arose in its application to public utilities. \cite{chartoff1982case} provide a critique of CAPM from a legal perspective.} Over the next few decades, as advances in corporate finance theory developed new tools for estimation of the cost of capital (including arbitrage pricing theory, discounted cash flow, multi-factor models, or dividend growth models), these were proposed and considered in the context of public utility regulation (see, e.g., the surveys in \cite{kolbe1984cost, alexander2000few,villadsen2017risk}).\footnote{\cite{roll1983regulation} advocate for the use of the arbitrage pricing theory in the context of public utility regulation.} On the whole, however, public utility regulators have tended to adopt developments in corporate finance theory.\footnote{One exception is \cite{michelfelder2013public}.}

This article has four main sections. Section \ref{sec:prelim} introduce some fundamental principles of cost of capital, for both individual cash-flows and streams of cash-flows. Section \ref{sec:rp1y} starts with the simplest case of a regulatory period that lasts one year. This section introduces the circularity problem and shows how it can be resolved by using a different cost of capital for the components of the cash-flow of the firm. Section \ref{sec:rpmy} extends that analysis to a regulatory period of $T$ years. Section \ref{sec:cod} explores the issues associated with estimating the cost of debt and argues that there is no value (and it may not be possible) to estimate a separate cost of equity and cost of debt. Section \ref{sec:con} concludes.



\section{Preliminaries}\label{sec:prelim}

As set out below, we will define the concept of the cost of capital with reference to the properties of the `valuation functional' or `valuation operator'. We will begin, therefore, by setting out the properties of this functional.

\subsection*{Definition of the present-value functional}

Let's suppose we are currently at time 0 and there is an uncertain future cash-flow arriving at time $t$. The uncertain payoff of this cash-flow at time $t$ is assumed to be reflected in the value of the random variable $X_t$. The \textbf{expected value} of $X_t$, in the light of the information available at time zero, denoted $\E_0(X_t)$, is a function (technically, a functional) from a set of random variables representing payoffs at time $t$ to the expected value (or mean) of those payoffs, when viewed from time zero.

Similarly, the \textbf{present value} of the future cash-flow at time zero, denoted $\V_{0 \to t}(X_t)$ is a function from the set of random variables representing payoffs at time t, to the present value of that random variable at time zero \citep{skiadis2022, hansen2009long, anderson2012, ross1978simple}. The present value of a future cash-flow at time 0 is also the price at which the right to receive that future cash-flow would trade at time 0 in an efficient market.

Both the expected value $\E_0(\cdot)$ and the present value $\V_{0 \to t}(\cdot)$ of a cash-flow are linear functions. If $X_t$ and $Y_t$ are both cash-flows arriving in at time $t$, and $a$ and $b$ are real numbers, the present value function satisfies:
\begin{equation}
    \V_{0 \to t}(a X_t + b Y_t) = a \V_{0 \to t}(X_t)+ b \V_{0 \to t}(Y_t)
\end{equation}


When a cash-flow arrives further in the future (say, in period 2), with uncertain payoff $X_2$, the present value of the cash-flow at time 1 is denoted $\V_{1 \to 2}(X_2)$. Viewed from time zero, this value is itself a random variable. The present value functional satisfies a recursion property:
\begin{equation}
    \forall 1\le t \le T,\;\; \V_{0 \to t}(\V_{t \to T}(X_T))=\V_{0 \to T}(X_T)
\end{equation} 

For regulatory purposes we often need to determine the present value of an indefinite \textit{stream of cash-flows} $X_1, X_2, X_3 , \ldots$, arriving at time 1, 2, 3 and so on. The present value of a stream of cash-flows is just the sum of the present value of the individual cash-flows.
\begin{equation}
    \V_0(X_1, X_2, X_3, \ldots) = \sum_{t=1} \V_{0 \to t}(X_t)
\end{equation}

Using the recursion property of the present value function, the present value of an indefinite stream of cash-flows is equal to the present value of a finite stream of cash-flows, truncated in year $T$, say, where we simply add the `terminal value' of the cash-flow stream to the final individual cash-flow:
\begin{align}
    \V_0(X_1, X_2, X_3, \ldots) &= \sum_{t=1} \V_{0 \to t}(X_t) \\
    &=\sum_{t=1}^{T} \V_{0 \to t}(X_t) + \sum_{t=T} \V_{0 \to t}(X_t)\\
    &= \sum_{t=1}^{T} \V_{0 \to t}(X_t)+ \sum_{t=T+1} \V_{0 \to T}(\V_{T \to t}(X_t))\\
    &=\sum_{t=1}^{T} \V_{0 \to t}(X_t) + \V_{0 \to T}( \V_T(X_{T+1}, X_{T+2}, \ldots)\\
    &=\V_0 (X_1, X_2, \ldots, X_{T-1}, X_T+\V_T(X_{T+1}, X_{T+2}, \ldots))
\end{align}
In the case where $T=1$, the present value of a stream of cash-flows is the sum of two terms: The one-period cash-flow $X_1$ and the terminal value at time 1 $\V_1=\V_1(X_2, X_3, X_4, \ldots)$:
\begin{equation}
    \V_0=\V_0(X_1+\V_1)
\end{equation}

\subsection*{Definition of the cost of capital}

The \textbf{cost of capital} for a cash-flow represented by the random variable $X_t$, denoted $\R_{0 \to t}(X_t)$ is the ratio of the expected value of the cash-flow to the present value (provided the present value of the cash-flow is not zero, in which case the cost of capital is undefined).\footnote{It is also common to define the cost of capital as equation \ref{eqn:cocdef} minus one. We have a slight preference for the approach set out here as it saves having to add and/or subtract one from different equations.}
\begin{equation}
    \R_{0 \to t}(X_t) \equiv \frac{\E_0(X_t)}{\V_{0 \to t}(X_t)}\label{eqn:cocdef}
\end{equation}

The present value of a cash-flow arriving at time zero, viewed from time zero, is just the expected value:
\begin{equation}
    \V_{0 \to 0}(X_0) = \E_0(X_0)
\end{equation}
It follows that the cost of capital at time zero for any cash-flow arriving at time zero is equal to one.\footnote{More generally, of course, $\R_{t \to t}(X_t)=1$.}

The present value of a certain cash-flow arriving at time $t$ in the future is the ratio of the certain cash-flow to a value known as the `risk-free rate'. Let's suppose that the cash-flow $X_t$ yields the certain value $a$ at time $t$, then:
\begin{equation}
    \V_{0 \to t}(X_t) = \frac{a}{RF_{0 \to t}}
\end{equation}
Here $RF_{0 \to t}$ is the discount rate or interest rate for a certain cash-flow at time $t$, viewed from time zero. It follows that the cost of capital for a cash-flow with a fixed payoff $a$ arriving at time $t$ is just the risk-free rate between time zero and time $t$:
\begin{equation}
    \R_{0\to t}(a)=\frac{\E_0(a)}{\V_{0 \to t}(a)}=RF_{0 \to 1t}
\end{equation}.

Due to the linearity properties of the expectation and value operators, the cost of capital of any cash-flow is \textit{independent of the scale} of the cash-flow:\footnote{It follows that we need only define the cost of capital for uncertain cash-flows with a mean of one.}
\begin{equation}
    \R_{0 \to t}(a X_t)=\frac{\E_0(a X_t)}{\V_{0 \to t}(a X_t)}=\R_{0 \to t}(X_t)\; \text{for any} \; a \ne 0\label{eqn:scale}
\end{equation}

Now consider the cost of capital for the sum of two cash-flows $X_t$ and $Y_t$. It follows from the linearity properties of the present value function that the cost of capital for the sum of two cash-flows can be written as the \textit{weighted average} of the cost of capital of the individual cash-flows. If $X_t$ and $Y_t$ are both cash-flows arriving at time $t$, the cost of capital for the sum can be expressed as a function of the cost of capital of the components in two ways. In the first formulation, the weighting depends on the share of the present value of each component in the sum:
\begin{equation}
    \R_{0 \to t}(X_t + Y_t) = \alpha \R_{0 \to t}(X_t)+ (1-\alpha) \R_{0 \to t}(Y_t)\label{eqn:wa1}
\end{equation}
Where:
\begin{equation}
    \alpha=\frac{\V_{0 \to t}(X_t)}{\V_{0 \to t}(X_t+Y_t)} \label{eqn:wb1}
\end{equation}
In the second formulation, the weighting depends on the share of the expected value of each component in the sum:
\begin{equation}
    \R_{0 \to t}(X_t + Y_t)^{-1} = \alpha \R_{0 \to t}(X_t)^{-1}+ (1-\alpha) \R_{0 \to t}(Y_t)^{-1}\label{eqn:wa2}
\end{equation}
Where:
\begin{equation}
    \alpha=\frac{\E_{0 \to t}(X_t)}{\E_{0 \to t}(X_t+Y_t)} \label{eqn:wb2}
\end{equation}

Finally, let's suppose we have a cash-flow $X_T$ arriving at time $T$. As we have seen, the present value of this cash-flow at an earlier time $t$ is denoted $V_{t \to T}(X_T)$. This is also a random variable. The cost of capital for this random variable at time zero is the ratio of the expected future price $\E_0(\V_{t \to T}(X_T))$ to the current price $\V_{0 \to T}(X_T)$:\footnote{There is also a formula for the cost of capital for a cash-flow arriving in two periods, as a function of the one-period cash-flows from time 0 to time 1 and from time 1 to time 2, as follows: $\R_{0\to 2}(X_2) = \R_{0\to 1}(\V_1(X_2)) \E_0(X_2) \E_0( \frac{\E_1(X_2)}{\R_{1\to 2}(X_2)})^{-1}$.}
\begin{equation}
    \R_{0 \to t}(\V_{t \to T}(X_T)) = \frac{\E_0(\V_{t \to T}(X_T))}{\V_{0 \to t}(\V_{t \to T}(X_T))}= \frac{\E_0(\V_{t \to T}(X_T))}{\V_{0 \to T}(X_T)}
\end{equation}

This approach is independent of the methodology used to value cash-flows. But it may be worth pointing out that, with some additional assumptions, the Capital Asset Pricing Model emerges from this approach. In the special case where the preferences of the individual investors take the mean-variance form, the cost of capital for any cash-flow $X_1$ arriving at time 1 is given by an equation which is analogous to the standard Capital Asset Pricing Model. This result is proved in appendix \ref{app:a}.
\begin{equation}
    \R_{0 \to 1}(X_1)^{-1} = RF_{0 \to 1}^{-1} - (RF_{0 \to 1}^{-1} - \R_{0 \to 1}(M_1)^{-1}) \beta_{0 \to 1}(X_1)\label{eqn:capm}
\end{equation}
Where $M_1$ is the cash-flow from holding the `market portfolio' at time 1, and $\beta_{0 \to 1}(X_1)$ is the `beta' of the cash-flow $X_1$ defined as:
\begin{equation}
    \beta_{0 \to 1}(X_1)=\frac{\E_0(M_1)}{\E_0(X_1)}\frac{Cov(X_1,M_1)}{Var(M_1)}
\end{equation}
Equation \ref{eqn:capm} is the correct formulation of the CAPM in this context. In this article, however, we will present more general results derived from the properties of the value functional, rather than a special case (such as the CAPM).

\section{One-year regulatory period}\label{sec:rp1y}

Let's now turn to examine questions regarding the estimation of cost of capital in the context of regulatory proceedings. At the outset it is important to be clear that there is no single correct definition of the cost of capital in all regulatory proceedings. There are many ways to regulate that achieve the overall objective of NPV=0. The correct cost of capital for any given regulatory process depends on precisely how the regulatory proceeding is formulated. Therefore we must be clear exactly how the regulatory process operates.

In order to keep things simple, let's start by considering the issues that arise when the regulatory period (that is, the period between price reviews or `regulatory resets') lasts exactly one year. This yields the simplest and most familiar form of the regulatory process.

\subsection*{The standard regulatory process when the regulatory period lasts one year}

We will assume that the regulator follows a standard regulatory process, known in Australia as the `Building Block Model'. In its most general form this is described as follows (and illustrated in figure \ref{fig:1}):
\begin{enumerate}
    \item At the start of each regulatory period (which we will label time zero) the regulator observes the value of the opening asset base $RAB_0$, forecasts opex, capex and sales during the regulatory period, and chooses a set of prices to apply during the regulatory period and the closing asset base $RAB_1$.
    \item At the end of the regulatory period (time one), the out-turn values of opex, capex and revenue (and therefore the cash-flow of the firm $X_1$) are realised, together with the closing asset base. This becomes the opening asset base for the subsequent regulatory period.
\end{enumerate}
This process continues over the life of the firm. At the end of the life of the firm the regulator ensures that the closing regulatory asset base is equal to zero.

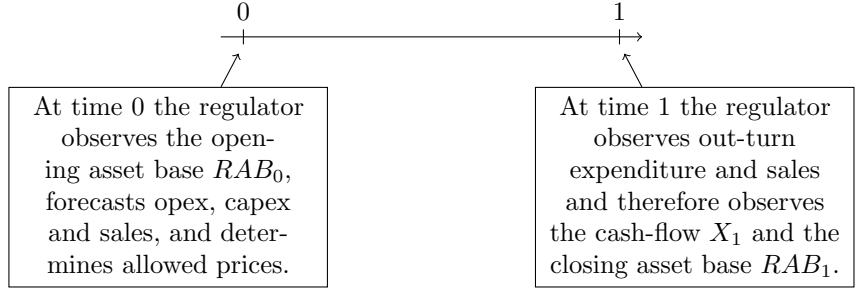
\begin{figure}[ht]
\singlespacing
    \centering
    \caption{Time evolution of a standard one-year regulatory process}
    \begin{tikzpicture}
    \draw [->] (-0.3,0) to (5.3,0);
    \draw (0,-0.1) to (0,0.1);
    \draw (5,-0.1) to (5,0.1);
    \node [above] (A) at (0,0.1) {0};
    \node [above] (B) at (5,0.1) {1};
    \node (A1) at (0,-0.1) {};
    \node (B1) at (5,-0.1) {};
    \node [draw, text width=4 cm, align = center] (C) at (-1,-2) {At time 0 the regulator observes the opening asset base $RAB_0$, forecasts opex, capex and sales, and determines allowed prices.};
    \node [draw, text width=4 cm, align = center] (D) at (6,-2) {At time 1 the regulator observes out-turn expenditure and sales and therefore observes the cash-flow $X_1$ and the closing asset base $RAB_1$.};
    \draw [->] (C) to (A1);
    \draw [->] (D) to (B1);
    \end{tikzpicture}
    \label{fig:1}
\end{figure}

Formally the regulatory process operates as follows: At the start of each regulatory period the regulator chooses the cash-flow of the firm $X_1$ and the closing asset base $RAB_1$ in such a way that the present value of the sum of these two is equal to the opening asset base $RAB_0$:
\begin{equation}
    RAB_0=\V_0(X_1+RAB_1)\label{eqn:bbm}
\end{equation}

 Under these assumptions the Fundamental Theorem of Regulation (see Appendix \ref{app:b}) shows that, at each point in time the asset base is equal to the future stream of cash-flows $RAB_t=\V_t(X_{t+1}, X_{t+2}, \ldots)$ and the regulated firm achieves NPV=0.

We will refer to $X_1$ as the \textbf{one-period cash-flow} of the firm, and the sum $X_1+RAB_1$ as the \textbf{combined cash-flow}.

Equation \ref{eqn:bbm} may look at first abstract, but we can move a step closer to regulatory practice by expanding the present value using the definition of the cost of capital and re-writing the equation as an expression for the allowed level of the cash-flow:
\begin{align}
    RAB_0 &= \V_0(X_1+RAB_1) = \frac{\E_0(X_1+RAB_1)}{\R_{0 \to 1}(X_1+RAB_1)}\nonumber\\
    \implies \; \E_0(X_1) &= \R_{0 \to 1}(X_1+RAB_1) \times RAB_0 -\E_0(RAB_1)\label{eqn:bbm9}
\end{align}

This equation can be made to look more familiar by adopting the conventional definition of the cost of capital $1+r_{0\to t}(X_t)=\R_{0 \to t}(X_t)$ and expanding the cash-flow allowance into its components:
\begin{equation}
X_1 = \underbrace{R_1}_{\text{Revenue}} - \underbrace{O_1}_{\text{Opex}} - \underbrace{K_1}_{\text{Capex}}
\end{equation}
With these definitions, equation \ref{eqn:bbm9} can be written in the following way: The expected revenue allowance\footnote{The revenue $R_1$ here should not be confused with the cost of capital $\R_{0 \to 1}$, for which we will use the blackboard bold font.} can be written as the normal sum of opex, `return on' and `return of' capital:
\begin{equation}
\E_0(R_1) = \underbrace{r_{0 \to 1}(X_1+RAB_1) \times RAB_0}_{\text{`Return on capital'}} + \underbrace{\E_0(O_1)}_{\text{Opex}} + \underbrace{\E_0(Dep_{0 \to 1})}_{\text{`Return of capital'}}\label{eqn:bbm10}
\end{equation}
Here:
\begin{equation}
Dep_{0 \to 1} = RAB_0 + K_1 - RAB_1\label{eqn:bbm11}
\end{equation}
Equations \ref{eqn:bbm10} and \ref{eqn:bbm11} are the familiar two equations which define the Building Block Model (the standard regulatory process used in Australia and around the world).

\subsection*{Problems in the estimation of the cost of capital}

The previous section noted that equation \ref{eqn:bbm9} (or its more familiar variants, equations \ref{eqn:bbm10} and \ref{eqn:bbm11}) captures the standard formulation of the regulatory process. But it is already apparent that there is a problem. As equation \ref{eqn:bbm9} shows, the fundamental task of the regulator is to choose a set of regulated prices so that the value for the expected cash-flow $X_1$ satisfies equation \ref{eqn:bbm9}. But $X_1$ appears on both sides of equation \ref{eqn:bbm9}. The regulator cannot obtain a formally correct cash-flow allowance $\E_0(X_1)$ (or cost of capital $\R_{0 \to 1}(X_1+RAB_1)$) without knowing how the cost of capital depends on the cash-flow allowance.\footnote{A similar circularity problem can arise when estimating the cost of capital using a weighted average formula: the weightings on the cost of equity and the cost of debt depend, in part on the value of the firm, which depends on the cost of capital. See \cite{mohanty2003practical}.}

In addition, the cost of capital $\R_{0 \to 1}(X_1+RAB_1)$ will, in general, depend on the choice of the closing asset base $RAB_1$.\footnote{In cases where the closing asset base is stochastic the cost of capital may also depend on the variation in the closing asset base.} Because the regulatory asset base varies over the life of the firm, we can expect that the cost of capital $\R_{0 \to 1}(X_1+RAB_1)$ also varies over the life of the firm, even if all other factors in the environment remain constant.

The standard resolution of this problem in regulatory practice is just to assume that the cost of capital $\R_{0 \to 1}(X_1+RAB_1)$ is independent of the cash-flow allowance $\E_0(X_1)$ and independent of the choice of the closing asset base $RAB_1$. This is a heuristic which is virtually universal in regulatory practice.

Is there a formulation of the regulatory process which does not suffer from this circularity problem? It turns out that there is. We can see this by returning to equation \ref{eqn:bbm}, and re-writing it as follows:
\begin{align}
    RAB_0 &=\V_0(X_1)+\V_0(RAB_1)\nonumber\\
    &=\frac{\E_0(X_1)}{\R_{0 \to 1}(X_1)}+\frac{\E_0(RAB_1)}{\R_{0 \to 1}(RAB_1)}
\end{align}
Therefore, we can re-formulate the Building Block Model (equation \ref{eqn:bbm9}) in a way which resolves the circularity problem:\footnote{As noted in equation \ref{eqn:scale} the cost-of-capital for the one-period cash-flow $\R_{0 \to 1}(X_1)$ is independent of the level of the expected cash-flow $\E_0(X_1)$.}
\begin{equation}
    \E_0(X_1)= \R_{0 \to 1}(X_1) \times RAB_0 -\E_0(RAB_1) \frac{\R_{0 \to 1}(X_1)}{\R_{0 \to 1}(RAB_1)}\label{eqn:b3}
\end{equation}

Comparing equations \ref{eqn:b3} and \ref{eqn:bbm9} we can see that this version of the Building Block Model is similar to the standard approach  except for the following:
\begin{itemize}
    \item The `return on capital' is defined by multiplying the cost of capital for the one-period cash-flow $\R_{0 \to 1}(X_1)$ (rather than the cost of capital for the combined cash-flow $\R_{0 \to 1}(X_1+RAB_1)$) by the opening asset base; and
    \item In the `return of capital', the closing asset base is scaled by the ratio of the costs-of-capital of the components (i.e., the ratio $\R_{0 \to 1}(X_1)/\R_{0 \to 1}(RAB_1)$.
\end{itemize}

We are now in a position to answer the first two of the questions set out above. Does the cost of capital for a regulated firm vary with the level of the regulatory asset base or the level of the allowed cash-flow?

The answer is as follows: In the conventional historic regulatory practice in Australia, the relevant cost of capital (that is, the multiplier of the asset base in the revenue allowance equation) is the cost of capital for the combined cash-flow $\R_{0 \to 1}(X_1+RAB_1)$. This will, in general vary with both the expected level of the one-period cash-flow $\E_0(X_1)$ and the expected level of the closing asset base $\E_0(RAB_1)$ even if nothing else in the environment changes.

In particular, the cost of capital will depend on the ratio of the level the one-period cash-flow relative to the closing asset base. If this ratio is small, the correct cost of capital is closer to the cost of capital for the closing asset base alone $\R_{0 \to 1}(RAB_1)$ which would normally be expected to be close to the risk-free rate. If this ratio is large, the correct cost of capital is closer to the cost of capital for the one-period cash-flow $\R_{0 \to 1}(X_1)$ which could be very large. As the ratio changes the relevant cost of capital will change, even if there are no other changes in the environment. There are examples of this effect in table \ref{table:1} and figure \ref{fig:2} below.

The dependence of the relevant cost of capital on the level of the cash-flow gives rise to a problem of circularity in the regulatory process. This problem of circularity can be resolved by changing the formulation of the Building Block Model to the formulation set out in equation \ref{eqn:b3} in which the relevant cost of capital is the cost of capital for the one-period cash-flow alone. This cost of capital does not depend on the level of the cash-flow.

\subsection*{Do these concerns make a difference in practice?}

The sections above have identified possible concerns with the conventional approach to setting the cost of capital. But do these issues make a material difference in practice?

To make this assessment let's explore how much the cost of capital of the combined cash-flow $\R_{0 \to 1}(X_1+RAB_1)$ might vary even if the costs of capital on the component cash-flows $\R_{0 \to 1}(X_1)$ and $\R_{0 \to 1}(RAB_1)$ remain unchanged.

Let's suppose that the cost of capital for the cash-flow $X_1$, is say, 20\%, and the cost of capital for the closing asset base $RAB_1$ is, say, 5\%, so $\R_{0 \to 1}(X_1)=1.2$ and $\R_{0 \to 1}(RAB_1)=1.05$. Table \ref{table:1} sets out the cash-flow allowance and the relevant cost of capital under different assumptions about the level of the opening and closing asset base.\footnote{The fourth column in table \ref{table:1} is given by equation \ref{eqn:b3}; the fifth column is given by equation \ref{eqn:wa2}.} As can be seen, the cost of capital for the combined cash-flow $X_1+RAB_1$ varies widely with the level of the asset base, even though there is no change in the underlying systematic risk faced by the firm (that is, no change in $\R_{0 \to 1}(X_1)$ and $\R_{0 \to 1}(RAB_1)$).

\newcommand\Tstrut{\rule{0pt}{2.6ex}}         
\newcommand\Bstrut{\rule[-0.9ex]{0pt}{0pt}}   

\begin{table}[h!]
\singlespacing
\centering
\caption{The combined cost of capital for a regulated firm $\R_{0 \to 1}(X_1+RAB_1)$ varies with the size of the components even if there is no change in the cost of capital for the components}
\begin{tabular}{|c | c | c | c | c | c |} 
 \hline 
 $RAB_0$ & $\E_0(RAB_1)$ & $\R_{0 \to 1}(X_1)$ & $\R_{0 \to 1}(RAB_1)$ & $\E_0(X_1)$ & $\R_{0 \to 1}(X_1+RAB_1)$ \Tstrut\Bstrut\\
 \hline
 \$1,000 & \$900 & 20\% & 5\% & \$171.83 & 7.14\%\Tstrut\\ 
 \$500 & \$400 & 20\% & 5\% & \$142.86 & 8.57\% \\
 \$1,000 & \$800 & 20\% & 5\% & \$285.71 & 8.57\%\\
 \$400 & \$200 & 20\% & 5\% & \$251.73 & 12.86\% \Bstrut\\ 
 \hline
\end{tabular}
\label{table:1}
\end{table}

Figure \ref{fig:2} presents another way of looking at this question. The left hand graph in figure \ref{fig:2} illustrates the impact on the combined cost of capital $\R_{0 \to 1}(X_1+RAB_1)$ of changes in the allowed regulatory cash-flow $\E_0(X_1)$ holding all other factors constant (here we assume $\E_0(RAB_1)=\$1000$ and  $\R_{0 \to 1}(X_1)=1.2$ and $\R_{0 \to 1}(RAB_1)=1.05$ as before). As can be seen, an increase in the allowed regulatory cash-flow (perhaps due to, say, a reduction in forecast expenditure) results in an increase in the combined cost of capital, even if nothing else changes in the environment.

The right hand graph in figure \ref{fig:2} illustrates how the combined cost of capital varies over the life of a firm with changes in the asset base. This graph illustrates the case of a firm which starts with an opening asset base of \$1000, and lasts five years. The regulated cash-flow allowance is chosen to be constant at $\E_t(X_{t+1})=\$263.97$ each year, ensuring that the closing asset base at the end of the life of the firm is zero ($RAB_5=0$). The regulatory asset base starts at \$1000 and declines to zero over the five years. As can be seen, the combined cost of capital $\R_{t \to t+1}(X_{t+1}+RAB_{t+1})$ increases as the regulatory asset base declines.

In both of these graphs the cost of capital for the one-period cash-flow is fixed at 20\% ($\R_{t \to t+1}(X_{t+1})=1.20$) and the cost of capital for the closing asset base is fixed at 5\% ($\R_{t \to {t+1}}(RAB_{t+1})=1.05$). 

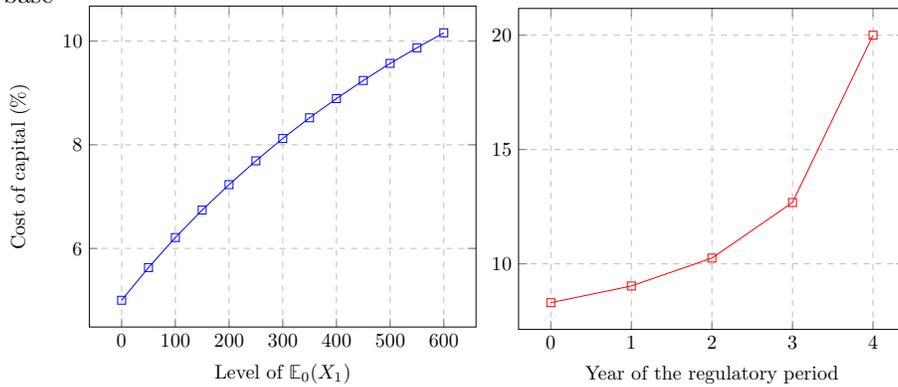
\begin{figure}[ht]
\singlespacing
    \centering
    \caption{Illustration of the dependence of the combined cost of capital $\R_{0 \to 1}(X_1+RAB_1)$ on the level of the allowed cash-flow or the regulatory asset base}
    \begin{tikzpicture}[scale=0.75]
        \begin{axis} [
            xlabel={Level of $\E_0(X_1)$},
            ylabel={Cost of capital (\%)},
            ymajorgrids=true,xmajorgrids=true,
            grid style=dashed,
        ]
        \addplot[
            color=blue,
            mark=square,
            ]
            coordinates {
            (0,5.00)(50,5.63)(100,6.21)(150,6.74)(200,7.23)(250,7.69)(300,8.12)(350,8.52)(400,8.89)(450,9.24)(500,9.57)(550,9.87)(600,10.16)
            };
        \end{axis}
        \end{tikzpicture}
    \begin{tikzpicture}[scale=0.75]
        \begin{axis} [
            xlabel={Year of the regulatory period},
            ymajorgrids=true,xmajorgrids=true,
            grid style=dashed,
        ]
        \addplot[
            color=red,
            mark=square,
            ]
            coordinates {
            (0,8.3)(1,9.03)(2,10.25)(3,12.68)(4,20)
            };
        \end{axis}
        \end{tikzpicture}
   \label{fig:2}
\end{figure}

\section{Multi-year regulatory period}\label{sec:rpmy}

For many years, the standard regulatory practice in Australia and New Zealand has involved the use of a fixed-length (five-year) regulatory period, with a single cost of capital for that period. Perhaps unsurprisingly, it turns out that this practice substantially complicates the question of setting the appropriate cost of capital.


As before, the relevant cost of capital depends very heavily on the precise formulation of the regulatory process. Therefore, as before, we must be precise as to the regulatory process we are using.

Let's suppose that the regulatory period lasts $T$ years (where $T>1$). In the context of a multi-year regulatory period, equation \ref{eqn:bbm} generalises as follows. Given the opening asset base $RAB_0$, the task of the regulator is to choose the cash-flow in each year of the regulatory period $X_1$, $X_2$, \ldots, $X_T$, and the closing asset base $RAB_T$, to satisfy the following expression:
\begin{equation}
    RAB_0=\V_0(X_1, X_2, \ldots, X_T+RAB_T)=\sum_{t=1}^T \V_0(X_t)+\V_0(RAB_T)\label{eqn:c1}
\end{equation}

In common regulatory practice, this is usually implemented as follows. A single parameter, which we will label $R$ is chosen. In addition, a sequence of values of the asset base $RAB_t$ is chosen. The regulated cash-flow allowance in each year of the regulatory period is then determined using a simple analogy to equation \ref{eqn:bbm9}:
\begin{align}
    \forall t=1, \ldots, T,\;\; \E_0(X_t) &=  R \times RAB_{t-1} -\E_0(RAB_t)\label{eqn:c2}
\end{align}

But, how should we choose the parameter $R$? Expanding out equations \ref{eqn:c2} and applying equation \ref{eqn:c1} we find that, given the expected cash-flows $\E_0(X_t)$ and the opening asset base $RAB_0$ and the expected closing asset base $\E_0(RAB_T)$, the parameter $R$ must be chosen to satisfy the following:
\begin{align}
    &\frac{\E_0(X_1)}{R}+\frac{\E_0(X_2)}{R^2}+\ldots+\frac{\E_0(X_T+RAB_T)}{R^T}=RAB_0\label{eqn:c4}
\end{align}
In other words, in this approach to regulation, the correct value for the cost of capital parameter $R$ is the \textbf{internal rate of return} of the cash-flow stream consisting of an outlay of $RAB_0$ at time zero, and cash-flow of $\E_0(X_1), \E_0(X_2), \ldots, \E_0(X_T+RAB_T)$ in the subsequent years of the regulatory period.

But again, we can see that there is a problem. As equation \ref{eqn:c4} makes clear, the parameter $R$ depends on the levels of the individual cash-flows $\E_0(X_1)$, $\E_0(X_2)$ and so on, in a complicated manner. But the parameter $R$ is also a key input into the determination of the $\E_0(X_1)$ and so on through equation \ref{eqn:c1}. Once again we have a problem of circularity.


As before, this problem can be resolved by changing the regulatory process. Rather than using a single cost of capital parameter $R$ for the entire regulatory period we should use different costs of capital for the individual components of the cash-flow of the firm.

There are two ways that this might be carried out. In the first approach, the allowed cash-flows are all set in advance, based on the costs of capital prevailing at the start of the regulatory period. In the second approach, the allowed cash-flows are set each year of the regulatory period, on the basis of the costs of capital prevailing at that time.

Let's consider the first approach in which the allowed cash-flows are all set in advance. Specifically, let's suppose that the regulator chooses parameters $R_t$ and $S_t$ for $t=1, \ldots, T$. The regulator then sets the cash-flow allowance as follows (this equation is the analogy to equation \ref{eqn:b3}):
\begin{align}
    \forall t=1, \ldots, T,\;\; \E_0(X_t) &=  R_t \times \E_0(RAB_{t-1}) -\E_0(RAB_t)\frac{R_t}{S_t}\label{eqn:d3}
\end{align}
The cash-flows set in this way satisfy the fundamental requirement of equation \ref{eqn:c1} provided we choose $R_1=\R_{0 \to 1}(X_1)$, $S_1 R_2=\R_{0 \to 2}(X_2)$, \ldots , $S_1 S_2 \ldots S_T = \R_{0 \to T}(RAB_T)$. These equations can be satisfied in different ways. However, a straightforward approach is to choose:\footnote{$S_t$ here is the `forward rate' for the cost of capital for the asset base -- specifically, it is the rate at time zero that is forecast to apply between time $t-1$ and time $t$.}
\begin{align}
    R_t &= \frac{R_{0 \to t}(X_t)}{\R_{0 \to {t-1}}(RAB_{t-1})}\label{eqn:rt}\\
    \text{and} \; S_t&=\frac{\R_{0 \to t}(RAB_t)}{\R_{0 \to {t-1}}(RAB_{t-1})}\label{eqn:st}
\end{align}
It is straightforward to check that, with this choice of the parameters, equation \ref{eqn:c1} is satisfied.

Under the second approach, the allowed cash-flows are not fixed in advance, but are set at the start of each year (within the regulatory period) on the basis of information that is available at the time. In this case, the relevant equation for establishing the cash-flow allowance can be written as follows:
\begin{align}
    \forall t=1, \ldots, T,\;\; \E_{t-1}(X_t) &=  R_t \times RAB_{t-1} -\E_{t-1}(RAB_t)\frac{R_t}{S_t}\label{eqn:d4}
\end{align}
This equation is the generalisation of equation \ref{eqn:b3}. The formulae to calculate the relevant values of the parameters $R_t$ and $S_t$ in this case are derived in appendix \ref{app:c}.

To see how the first approach might work in a simple example, let's suppose that we have a regulatory period that lasts five years. At time zero the opening asset base is $RAB_0=\$1,000$. The regulator chooses a path for the asset base $RAB_1=\$900$, $RAB_2=\$800$, \ldots, $RAB_5=\$500$. The cost of capital for the cash-flow and for the asset base (which is just equal to the risk-free rate of 5\% per annum) are set out in table \ref{table:3}. The cost of capital for the asset base (assuming a flat term structure and a risk-free rate of 5\% per annum) is:
\begin{equation}
    \R_{0 \to t}(RAB_t) = (1.05)^t
\end{equation}
The cost of capital for the cash-flow is chosen to satisfy the condition:\footnote{This can be justified using the assumptions that the term structure is flat and no new information about the future cash-flow arrives over time.}
\begin{equation}
    \R_{0 \to t}(X_t) = (1.05)^{t-1}(1.20)
\end{equation}
The remainder of the table shows the implied values of $R_t$ and $S_t$ and the resulting cash-flow allowance $\E_0(X_t)$.\footnote{$R_t$ and $S_t$ are given by equations \ref{eqn:rt} and \ref{eqn:st} respectively. $\E_0(X_t)$ is given by equation \ref{eqn:d3}.}

Using these values we can calculate that the value of the parameter $R$ (the single cost of capital for entire regulatory period) is 7.16\%. As before, this value is a complicated mix of the different costs of capital for the different cash-flows and timings during the regulatory period. It is not possible to calculate this value until after the regulatory cash-flow allowances have been determined.

The last row of table \ref{table:3} shows the implied value for the cost of capital that would be required if the regulator followed the naive approach of equation \ref{eqn:bbm9}. As can be seen, in this case the parameter chosen is a `mixture' of cost of capital for the cash-flow $X_t$ and the asset base $RAB_t$ and varies across the regulatory period.

\begin{table}[h!]
\singlespacing
\centering
\caption{Illustration of calculating the cash-flow allowance over a five-year regulatory period}
\begin{tabular}{|c | c | c | c | c | c | c |} 
 \hline
 Time & $t=0$ & $t=1$ & $t=2$ & $t=3$ & $t=4$ & $t=5$\Tstrut\Bstrut\\ 
 \hline
 $RAB_t$ & \$1,000 & \$900 & \$800 & \$700 & \$600 & \$500\Tstrut\\ 
 $\R_{0 \to t}(X_t)$ & 1.000 & 1.200 & 1.260 & 1.323 & 1.389 & 1.459 \\
 $\R_{0 \to t}(RAB_t)$ & 1.000 & 1.05 & 1.103 & 1.158 & 1.216 & 1.276 \\
 $R_t$ &  & 1.20 & 1.20 & 1.20 & 1.20 & 1.20\\
 $S_t$ &  & 1.05 & 1.05 & 1.05 & 1.05 & 1.05\\
 $\E_0(X_t)$ & & \$171.43 & \$165.71 & \$160.00 & \$154.29 & \$148.57 \\
 Implied CoC & & 7.14\% & 7.30\% & 7.50\% & 7.76\% & 8.10\% \Bstrut\\ 
 \hline
\end{tabular}
\label{table:3}
\end{table}

At this point we can answer one of the questions posed at the outset: What is the correct term for the cost of capital used in regulatory proceedings? From the analysis above we can provide the following answer:
\begin{itemize}
    \item Where a single cost-of-capital parameter is used to determine all of the cash-flow allowances throughout a regulatory period (as in equation \ref{eqn:c2}), there is no single correct `term' for this cost-of-capital. Rather, this cost of capital is a mix of a number of different terms (corresponding to the cash-flows within the regulatory period, and the asset base at the end of the regulatory period). The term of each of these individual cash-flows is shorter than or equal to the length of the regulatory period. Formally it is the internal rate of return associated with a cash-flow stream.
    \item As before there arises a circularity problem. The single cost-of-capital parameter depends on the cash-flow allowances, but the cash-flow allowances depend on this cost-of-capital parameter. In order to resolve the circularity problem we must use a separate cost of capital for each of the individual components of the cash-flow of the firm -- that is, a separate cost of capital for $X_t$, $t=1, \ldots, T$ and $RAB_T$.
    
    In the case where all of the cash-flow allowances are set at the beginning of the period (the first approach above), the regulatory process must follow equation \ref{eqn:d3}. Specifically:
    \begin{itemize}
        \item The relevant cost of capital for the purposes of determining the allowed `return on capital' (that is, the coefficient on the regulatory asset base) must be given by:
        \begin{equation}
            R_t=\frac{R_{0 \to t}(X_t)}{\R_{0 \to {t-1}}(RAB_{t-1})}
        \end{equation}
        \item In calculating the `return of capital' the closing regulatory asset base must be scaled by
        \begin{equation}
            \frac{R_t}{S_t}=\frac{R_{0 \to t}(X_t)}{\R_{0 \to {t}}(RAB_t)}
        \end{equation}
        This reduces to equation \ref{eqn:b3} in the case of a regulatory period of one year.
    \end{itemize}
    
\end{itemize}

As an illustration of the effect of these observations, let's use some typical values for the cash-flow and asset base drawn from an actual regulatory proceeding. In this case we will use distribution businesses in Australia. These are regulated using a five-year regulatory cycle. From the discussion above, this means that there are 6 costs-of-capital we need to estimate: $\mathbb{R}_{0 \to t}(X_t)$, $t=1,\ldots,5$ and $\mathbb{R}_{0 \to 5}(RAB_5)$. We will assume values for these costs of capital and then determine the implications for the value of the parameter $R$.

As before we will assume that the (annualised) risk-free interest rate for all terms is, say, 1.05 (5\%). In other words, the cost of capital for a fixed value in one year is $1.05$, in two years is $1.05^2$, and so on. Following standard regulatory practice, we will make the assumption that $RAB_5$ is a fixed number, so it should receive a cost of capital equal to the risk-free rate which as just noted, is $1.05^5$. We will make the assumption that, at time 0, $\mathbb{V}_{s \to t}(X_t)$ is a fixed value for $0<s<t$.\footnote{In essence, this assumes that the regulator receives no new information about the future cash-flows until the year in which the cash-flow is received.} Finally, we will assume that $\mathbb{V}_{t-1 \to t}(X_t) = \mathbb{E}_{t-1}(X_t) / 1.35$. It follows that the cost of capital for the first cash-flow $X_1$ is 1.35 (as before), and for all the other cash-flows:
\begin{equation}
    \mathbb{R}_{0 \to t}(X_t) = (1.05)^{t-1}(1.35)
\end{equation}

Given these assumed values for the component costs of capital, we can use the actual cash-flow and closing asset base for the 13 electricity network distribution businesses on the east coast of Australia (known as Distribution Network Service Providers, or DNSPs) over the period 2014-2019 to determine the value of $R$ which satisfies equation \ref{eqn:c4}. The results are set out in table \ref{table:3}. As can be seen, the variation in the relative size of the cash-flows gives rise to a variation of about 400 basis points in the discount factor $R$, even though all of these DNSPs are assumed to have exactly the same underlying component costs of capital. This variation is roughly the same size as a variation in, say, the market risk premium of 0.5\%. 

\begin{table}[h!]
\singlespacing
\centering
\caption{Illustration of different values of the discount factor $R$ for real-world distribution network businesses with the same underlying component costs of capital}
\begin{tabular}{|c | c |} 
 \hline
 DNSP & Value of $R$ \Tstrut\Bstrut\\
 \hline
 Energex & 5.543\% \Tstrut\\ 
 Evoenergy & 5.581\% \\
 AusNet & 5.622\% \\
 Ergon Energy & 5.631\% \\
 CitiPower & 5.649\% \\
 United Energy & 5.688\% \\
 Powercor & 5.735\% \\
 Jemena & 5.753\% \\
 Ausgrid & 5.819\% \\
 Endeavour Energy & 5.796\% \\
 TasNetworks & 5.920\% \\
 SA Power Networks & 5.922\% \\
 Essential Energy & 5.942\%
 \Bstrut\\ 
 \hline
\end{tabular}
\label{table:3}
\end{table}

\section{The cost of capital for the debt stream}\label{sec:cod}

In standard regulatory practice the relevant cost of capital for regulatory purposes is often estimated as a weighted average of the cost of capital for the debt and equity cash-flow streams of the regulated firm.\footnote{In conventional corporate finance theory and practice, it is common to observe that the total cash-flow of the firm (denoted $X_1$, $X_2$, \ldots above) is paid out to the investors in the form of two distinct streams of payments: The payments to equity holders (in the form of dividends, plus share buybacks, less new share issues), and the payments to debt holders (in the form of interest payments, plus retirement of existing debt, less new debt issues). It is common to seek to estimate the cost of capital for the equity payment stream and the debt payment stream separately, before recombining them to find the `weighted average' cost of capital for the firm as a whole.} In addition, in standard regulatory practice the cost of capital for debt is typically estimated as the current `yield to maturity' on a corporate bond of the relevant credit rating and term. Usually the term is taken to be the same as the length of the regulatory period (e.g., five years) or longer (say, ten years). In recent years some regulators in Australia have used a cost of debt which is based on a trailing average of historic rates on corporate bonds of the relevant credit rating and term.

What can we say about the theoretically-correct approach to estimating the cost of debt in regulatory proceedings?



\subsection*{Debt preliminaries}

let's assume that we have a set of debt instruments, distinguished only in their date of maturity $t$.\footnote{All other characteristics of the debt instruments, such as the credit rating, or seniority, are assumed to be chosen so that the traded debt instruments are a perfect substitute for the debt instruments of the firm whose cost of capital we are estimating.}
The debt instrument which matures at time $t$ is assumed to make an uncertain payoff given by the random variable $I_t$ at time $t$. There are no other payments from each debt instrument (i.e., they are `zero coupon bonds').\footnote{This is without loss of generality as a regular bond paying coupons can be constructed out of a series of zero-coupon bonds.} These instruments are assumed to be actively traded in an efficient market. The current price of this debt instrument in the market is its present value $\mathbb{V}_{0 \to t}(I_t)$. It follows that the cost of capital to maturity of each of these instruments is the ratio of the expected future payoff $\mathbb{E}_0(I_t)$ to the current market price $\mathbb{V}_{0 \to t}(I_t)$:
\begin{equation}
    \mathbb{R}_{0 \to t}(I_t)=\frac{\mathbb{E}_0(I_t)}{\mathbb{V}_{0 \to t}(I_t)}
\end{equation}

At time zero the regulated firm is assumed to hold a portfolio of debt instruments, with a volume of the instrument maturing at time $t$ given by the quantity $D_{0 \to t}$ for $t=1,2, \ldots$. From the linearity of the value function, the value of this portfolio at time zero can be directly derived from the current observed price of each debt instrument:
\begin{equation}
    \sum_{t=1} \mathbb{V}_{0\to t}(D_{0 \to t}I_t) = \sum_{t=1} D_{0 \to t} \mathbb{V}_{0\to t}(I_t) 
\end{equation}

At the end of the first period, the debt instrument maturing at time $t=1$ matures, making the payoff $D_{0 \to 1} I_1$. In addition, at this time the firm is assumed to be able to make changes to its portfolio of debt instruments. We can represent this by the assumption that the firm sells its entire portfolio purchased at time zero, which has the value at time 1 of $\sum_{t=2} \mathbb{V}_{1\to t}(D_{0 \to t}I_t)$, and then purchases a new portfolio of debt instruments, given by the quantity $D_{1 \to t}$ of the instrument $I_t$, $t=2,3 , \ldots$. The cost of purchasing this new portfolio (which is also its current value) is $\sum_{t=2} \mathbb{V}_{1\to t}(D_{1 \to t}I_t)$. Ignoring transactions costs, the net payoff at time one is therefore as follows:
\begin{equation}
    X^D_1=D_{0 \to 1} I_1 + \sum_{t=2} D_{0 \to t} \mathbb{V}_{1\to t}(I_t) - \sum_{t=2} D_{1 \to t} \mathbb{V}_{1\to t}(I_t)\label{eqn:dps1}
\end{equation}
The combined payoff is therefore:
\begin{equation}
    X^D_1+\V^D_1=D_{0 \to 1} I_1 + \sum_{t=2} D_{0 \to t} \mathbb{V}_{1\to t}(I_t)\label{eqn:dps2}
\end{equation}

Importantly, the present value of the stream of debt payments at time zero depends \textit{only} on the value of the debt portfolio that is held at time zero (all future changes in the debt portfolio are of no consequence). This is demonstrated in appendix \ref{app:d}. 

\subsection*{Estimating the cost of debt in a one-year regulatory period}

Let's assume that the stream of payments to equityholders and debtholders are denoted $(X^E_1, X^E_2, \ldots)$ and $(X^D_1, X^D_2, \ldots)$, respectively. The total cash-flow of the firm is assumed to be paid out in total to equityholders and debtholders each period:
\begin{equation}
    \forall t, \; X_t=X^E_t+X^D_t\label{eqn:decomp}
\end{equation}
As above, the cash-flow stream to debtholders $X^D_t$ is determined by the portfolio of debt instruments held by the firm (that is, any payments for maturing debt instruments) plus purchases of new debt instruments less sales of old debt instruments.

Let's return now to the case of a one-year regulatory period. As we have seen, the standard regulatory practice involves estimating a single cost of capital for the firm as a whole $X_1+RAB_1$, which we labeled $\R_{0 \to 1}(X_1+RAB_1)$. 
It follows immediately from equation \ref{eqn:decomp} that the cost of capital for the combined cash-flow of the firm $\R_{0 \to 1}(X_1+RAB_1)$ can be expressed as a weighted average of the cost of capital for the equity payment stream $\R_{0 \to 1}(X^E_1+\V^E_1)$ and the cost of capital for the debt payment stream $\R_{0 \to 1}(X^D_1+\V^D_1)$, using either of the weighted average formulae in equations \ref{eqn:wa1} or \ref{eqn:wa2}.
Let's put aside the circularity problems discussed above and explore what we can say about the estimation of the cost of debt as a step toward the estimation of the cost-of-capital for the firm as a whole.

The relevant cost of debt is the cost of capital for the combined debt cash-flow  $X^D_1+\mathbb{V}^D_1$. We can express this cost of debt as the weighted average of the cost of capital of the instruments in the portfolio:
    \begin{align}
    \R_{0 \to 1}(X^D_1+\V^D_1) &= \frac{\E_0(X^D_1+\V^D_1)}{\V_{0 \to 1}(X^D_1+\V^D_1}\nonumber\\
    &=D_{0 \to 1}\frac{\E_0(I_1)}{\V^D_0}+ \sum_{t=2} D_{0 \to t} \frac{\E_0(\V_{1 \to t}(I_t))}{\V^D_0}\nonumber\\
    &=D_{0 \to 1}R_{0 \to 1}(I_1)\frac{\V_{0 \to 1}(I_1)}{\V^D_0}\nonumber\\
    &+\sum_{t=2} D_{0 \to t} \R_{0 \to 1}(\V_{1\to t}(I_t))\frac{\V_{0 \to t}(I_t)}{\V^D_0}\label{eqn:dcoc}
    \end{align}

    The weighting on each instrument in this depends on the ratio of the current price for the debt instrument in the market  $\V_{0\to t}(I_t)$ to the total value of the portfolio  $\V_0^D$. In principle both of these can be easily observed. In addition, the relevant cost of capital for the one-year debt instrument $I_1$ can, is the ratio of the expected payoff in one year $\E_0(I_1)$ to the current market price $\V_0(I_1)$:
    \begin{equation}
        \R_{0 \to 1}(I_1) = \frac{\E_0(I_1)}{\V_0(I_1)}
    \end{equation}
    In principle the expected payoff on the debt instrument can be estimated if we can estimate the probability of default and the likely recovery of funds to debtholders in the event of default. As noted above, the current market price can be easily observed. As a consequence, the relevant cost of capital for a one-year debt instrument can, in principle be estimated.

    But what about the relevant cost of capital for the longer-term debt instruments in the portfolio? For longer-term debt instruments the relevant cost of capital for each instrument in equation \ref{eqn:dcoc} is the cost of capital associated with holding the long-term debt instrument from time zero to time one: $\R_{0 \to 1}(\V_{1\to t}(I_t))$. This is equal to the expected future (time one) price of the instrument over the current price:
    \begin{equation}
        \R_{0 \to 1}(\V_{1 \to t}(I_t)) = \frac{\E_0(\V_{1 \to t}(I_t))}{\V_{0 \to 1}(\V_{1 \to t}(I_t))}=\frac{\E_0(\V_{1 \to t}(I_t))}{\V_{0 \to t}(I_t)}
    \end{equation}

Unfortunately, the expected future price of the debt instrument $\E_0(\V_{1 \to t}(I_t))$ cannot be easily observed in the market. This value is related to forecasts of future interest rates and investor tolerance of risk.

These observations are illustrated in figure \ref{fig:3}. Let's assume that the regulated firm holds a portfolio of debt instruments maturing in one, two and three years, labelled $I_1$, $I_2$, and $I_3$. Each of these instruments has a `face value' of, say, \$1000. The current price of these three instruments is \$900, \$800, and \$700, say. The regulator can in principle estimate the expected payout on the instrument maturing in one year, $I_1$. Although it has a face value of \$1000, the actual expected payout will be somewhat less, reflecting the probability of default and the recovery expected in the event of default. Let's suppose that the expected future payout is, say, \$950. In this case the cost of capital for this instrument is $\$950/\$900-1 = 5.56\%$. But in the case of the longer-term instruments $I_2$ and $I_3$ it is not easy to estimate the price of these instruments at time one. As a result estimating the cost of capital between time 0 and time 1 for these instruments is not straightforward.

\begin{figure}[ht]
\singlespacing
    \centering
    \caption{It is not, in general, possible to estimate the cost of capital for a portfolio of debt instruments using currently observed market data}
    \begin{tikzpicture}
    \draw [->] (0,2.5) to (2,2.5);
    \draw [->] (0,1) to (4,1);
    \draw [->] (0,-0.5) to (6,-0.5);
    \draw [->,dashed, bend left] (0,1) to (2,1);
    \draw [->,dashed, bend left] (0,-0.5) to (2,-0.5);
    \draw [->,dashed, bend left] (2,1) to (4,1);
    \draw [->,dashed, bend left] (2,-0.5) to (6,-0.5);
    \node [below] at (-1,-0.8) {Time};
    \node [below] (A) at (0,-0.8) {0};
    \node [below] (B) at (2,-0.8) {1};
    \node [below] (C) at (4,-0.8) {2};
    \node [below] (C1) at (6,-0.8) {3};
    \node [above, text width =1.5 cm, align =center] (D) at (0,3.3) {Current price};
    \node [above, text width =1.5 cm, align =center] (E) at (2,3.3) {Expected time 1 payoff};
    \node [left] (I1) at (-0.5,3) {$I_1$};
    \node [left] (I2) at (-0.5,1.7) {$I_2$};
    \node [left] (I3) at (-0.5,0.4) {$I_3$};
    \node  (F1) at (0,3) {\$900};
    \node  (G1) at (0,1.7) {\$800};
    \node  (H1) at (0,0.4) {\$700};
    \node (L1) at (2,3) {$\E_0(I_1)=\$950$};
    \node (L2) at (2,1.7) {$\E_0(\V_{1 \to 2}(I_2))=?$};
    \node (L3) at (2,0.4) {$\E_0(\V_{1 \to 3}(I_3))=?$};
    \node [draw, text width=3 cm, align = center] (box1) at (-3,1.5) {At time 0 the current prices for each debt instrument can be observed in the market.};
    \node [draw, text width=3 cm, align = center] (box2) at (5.5,4) {The future payoff of the debt instrument maturing at time 1 can be estimated.};
    \node [draw, text width=3.3 cm, align = center] (box3) at (6.2,1.5) {The time 1 prices of the debt instruments maturing in the future cannot be easily estimated.};
    \draw [->] (box1) to (I1);
    \draw [->] (box1) to (I2);
    \draw [->] (box1) to (I3);
    \draw [->] (box2) to (L1);
    \draw [->] (box3) to (L2);
    \draw [->] (box3) to (L3);
    \end{tikzpicture}
    \label{fig:3}
\end{figure}
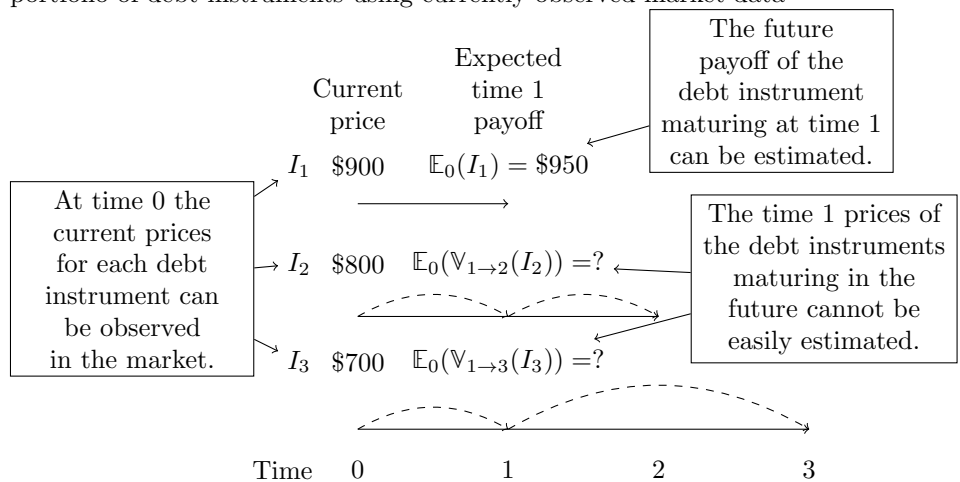

We are now in a position to answer one of the questions asked at the outset: Can we estimate the cost of debt for a regulated firm by observing the appropriate yield-to-maturity on a corporate bond of the right term? If so, what is the right term?

This analysis provides no support for the assertion that, in standard regulatory practice, we can estimate the cost of debt for a regulated firm by observing the yield-to-maturity on a corporate bond of the right term, for the following reasons:
\begin{itemize}
    \item In standard, historic regulatory practice (as summarised in equation \ref{eqn:bbm9} or equations \ref{eqn:bbm10} and \ref{eqn:bbm11}), the regulator is interested in estimating the cost of capital of the combined cash-flow for the firm as a whole. If we seek to estimate this cost of capital as a weighted average of the cost of equity and the cost of debt the relevant cost of capital for debt is the cost of capital for the debt payment stream of the firm. $\R_{0 \to 1}(X^D_1+\V^D_1)$. This depends on the \textit{portfolio of debt instruments} held by the regulated firm (and not just the cost of capital for a single bond).

    \item In general, the debt portfolio held by the regulated firm will include instruments with a range of terms. Although it is relatively easily to estimate the cost of capital for a debt instrument maturing in one year, it is not straightforward to estimate the one-year cost of capital for debt instruments maturing in future years.

    \item The current yield-to-maturity of a debt instrument is the ratio of the `face value' to the current price. But the relevant cost of capital $\R_{0 \to t}(I_t)$ is the ratio of the expected future payout to the current price:
    \begin{equation}
        \R_{0 \to t}(I_t)=\frac{\E_0(I_t)}{\V_{0 \to t}(I_t)}
    \end{equation}
    For any debt instrument other than a risk-free instrument, the expected future payout $\E_0(I_t)$ is less than the face value due to the probability of default. Therefore, the yield-to-maturity on a bond is an over-estimate of the cost of capital for that bond.
\end{itemize}

If the cost of capital for the debt portfolio of the regulated firm cannot be easily observed (as appears to be normally assumed) it follows that there appears to be little value in separating the cost of capital for the firm as a whole into a weighted average of the cost of equity and the cost of debt.

This leaves two possibilities: We could estimate the cost of capital for the firm as a whole (as before) -- or, more strictly, the component costs of capital for the firm as a whole, as set out in equation \ref{eqn:b3}. Alternatively, we could implement a version of the regulatory process which only requires estimation of the (components of the) costs of equity.

To see how this might be achieved, let's assume that the cash-flow stream to debt can be treated as exogenous -- determined outside the regulatory process. The regulatory process then determines the equity cash-flow $X^E_1$ and the equity asset base $RAB^E_1$. The required version of the Building Block Model equations can be derived as follows. First, equation \ref{eqb:bbm} can be expanded as follows:
\begin{align}
    RAB_0 &= \V_{0}(X_1+RAB_1) = \V_{0}(X^E_1+RAB^E_1 + X^D_1+\V^D_1)\nonumber\\
    &= \frac{\E_{0}(X^E_1)}{\R_{0 \to 1}(X^E_1)}+\frac{\E_{0}(RAB^E_1)}{\R_{0 \to 1}(RAB^E_1)} + \V_{0 \to 1}(X^D_1+\V^D_1)
\end{align}
It follows that the allowed cash-flow stream to equity of the regulated firm should be determined as follows:
\begin{align}
    \E_0(X^E_1) &= RAB_0 \times \R_{0 \to 1}(X^E_1) 
    - \E_{0}(RAB^E_1) \frac{\R_{0 \to 1}(X^E_1)}{\R_{0 \to 1}(RAB^E_1}\nonumber\\
    &+ \V_{0 \to 1}(X^D_1+\V^D_1) \R_{0 \to 1}(X^E_1)\nonumber\\
    &=RAB_0 \times \R_{0 \to 1}(X^E_1)  \nonumber\\
    &- \left(\E_{0}(RAB^E_1)- \R_{0 \to 1}(RAB^E_1) \sum_{t=1} D_{0 \to t} \V_{0 \to t}(I_t)\right)\frac{\R_{0 \to 1}(X^E_1)}{\R_{0 \to 1}(RAB^E_1)}\label{eqn:f1}
\end{align}

This is a further variation on the Building Block Model, extending equation \ref{eqn:b3} by replacing the expected closing asset base $\E_{0}(RAB^E_1)$ with an expression that subtracts the current value of the debt portfolio: $\E_{0}(RAB^E_1)- \R_{0 \to 1}(RAB^E_1) \sum_{t=1} D_{0 \to t} \V_{0 \to t}(I_t)$. This last term ($\sum_{t=1} D_{0 \to t} \V_{0 \to t}(I_t)$) can be directly observed from market data.



We have seen that, in the case of a single-year regulatory period, if we use a single cost of capital for the combined cash-flow of the firm it is possible to write that cost of capital as a weighted average of the (combined) cost of capital for debt and the (combined) cost of capital for equity. If we seek to implement a version of the regulatory process in which we estimate separate costs of capital for the components (as in equation \ref{eqn:b3}) then again we can express the costs of capital for those components as a weighted average of the corresponding component cost of capital for debt and equity separately.

What about the case of a multi-year regulatory period? Let's suppose that we seek to estimate a single cost of capital for the entire regulatory period (as in equation \ref{eqn:c2}. Can we express this cost of capital parameter $R$ as a weighted average of the corresponding parameter for equity and for debt?

The answer is no. Let's suppose that the parameter $R$ satisfies equation \ref{eqn:c4} and the corresponding parameter $R^E$ and $R^D$ satisfies the corresponding equation for debt and equity. In the case where $T=2$ this yields the following:
\begin{align}
    &\frac{\E_0(X_1)}{R}+\frac{\E_0(X_2+RAB_3)}{R^2}\nonumber\\
    &=\frac{\E_0(X^E_1)}{R^E}+\frac{\E_0(X^E_2+RAB^E_2)}{{R^E}^2}+\frac{\E_0(X^D_1)}{R^D}+\frac{\E_0(X^D_2+\V^D_2)}{{R^D}^2}
\end{align}
In this case the parameter $R$ cannot be expressed as a weighted average of $R^D$ and $R^E$. This calls into question the common use of a weighted-average cost of capital in the context of a multi-year regulatory period.

\section{Conclusion}\label{sec:con}

Cost of capital issues are amongst the most controversial in regulatory practice. In my view, these debates have been made more clouded and confused by the lack of a strong theoretical foundation. In the absence of a clear, strong, theoretical foundation, regulators and courts are not in a position to make reasoned, rational choices between one approach or methodology and another. This tends to prolong and perpetuate disputes.

Cost of capital for regulatory purposes tends to draw heavily and uncritically on the corporate finance literature. But the corporate finance context tends to be quite different from the regulatory context, with different objectives and assumptions. Approaches which are commonplace in the corporate finance literature (such as the estimation of the cost of capital as a weighted average of the cost of equity and the cost of debt) do not necessarily carry over to the regulatory context.

This article seeks clarify and formalise the theory of cost of capital for regulatory purposes. Because the cost of capital for a regulated firm depends strongly on the precise formulation of the regulatory approach, this article has sought to be clear about the standard regulatory approach, and various possible variations. A starting point of this analysis is the assumption that (putting aside incentive concerns), a central objective of all cost-based regulation is the achievement of NPV=0. Errors in the estimation of the cost of capital potentially undermine this objective.

There are different ways of setting an allowed revenue stream to achieve an overall NPV of zero. Those different approaches to regulating will, in general, require a different corresponding cost of capital (or potentially, multiple different costs of capital). If there was one cost of capital that was materially easier to accurately estimate than others, we might favour the corresponding approach to regulation. But this is not obviously the case.

This analysis has suggested the following problems with, and potential improvements to standard regulatory practice:
\begin{itemize}
    \item In standard regulatory practice (in the one period case, as summarised in equation \ref{eqn:bbm9} or equations \ref{eqn:bbm10} and \ref{eqn:bbm11}), the regulatory process depends on estimates of a single cost of capital for the regulated firm as a whole, which we have referre to as the combined cost of capital, denoted $\R_{0 \to 1}(X_1+RAB_1)$. This single cost of capital depends on the level of both the one-period cash-flow $\E_0(X_1)$ and the closing asset base $\E_0(RAB_1)$. But this cost of capital is itself an input to the determination of the regulated cash-flow allowance, giving rise to a problem of circularity. At a minimum this muddies the problem of estimation of the cost of capital under this regulatory process.
    
    The regulatory process could be made clearer and cleaner by changing the regulatory process so that it makes use of a separate cost of capital for $X_1$ and for $RAB_1$ individually (which we have referred to as the component costs of capital). In the context where the closing asset base $RAB_1$ is chosen by the regulator, the cost of capital for this component is just the risk-free rate. But this still leaves the problem of estimating the cost of capital for $X_1$. If this could be done effectively it would result in more effective achievement of the fundamental objective of NPV=0.

    \item In the context of a multi-year regulatory period, the standard regulatory approach is to use a single cost of capital for the entire regulatory period. This single cost of capital parameter is the solution to an `internal rate of return' calculation which depends on the level of the cash-flow allowance $E_0(X_t)$ in each year of the regulatory period and the level of the closing asset base $\E_0(RAB_T)$. This cost of capital parameter suffers from the same problem of circularity mentioned above.
    
    Many regulators have argued that the appropriate term of this cost of capital is equal to the length of the regulatory period. This practice is not supported in the theory set out here. The relevant single cost of capital parameter is a complicated mix of costs of capital of various terms, shorter than and equal to the length of the regulatory period.
    
    The regulatory process could be made clear and cleaner by changing the regulatory process to use a different cost of capital for each component of the cash-flow $X_1, X_2, \ldots$ over the period. There are different ways of setting the allowed revenue over the regulatory period, but, if the revenues are set annually, the relevant cost of capital for each cash-flow individually is a one-year rate (either the forward rate, or the out-turn rate, depending on the approach used). But, in any case, the relevant term of the cost of capital is not equal to the length of the regulatory period.

    \item The standard regulatory approach estimates the cost of capital as the weighted average of the cost of equity and the cost of debt. The cost of debt is estimated in different ways, but one typical way is to estimate the cost of debt as the yield-to-maturity on a corporate bond of a particular credit rating and term. The analysis set out here does not support that practice. The relevant cost of debt is the cost of capital for the debt portfolio of the regulated firm. Although the cost of capital for debt instruments maturing in one year may (in theory) be estimated drawing on knowledge of the current price of those instruments in the market, the cost of capital for debt instruments maturing in future years cannot be observed using current market data. As a result it is not, in general, possible to easily estimate the cost of capital for the debt portfolio of the regulated firm.
    
    This calls into question the value of separately estimating the cost of equity and the cost of debt and combining them to form an estimate of the cost of capital for the firm as a whole. If neither the cost of equity nor the cost of debt can be easily estimated, it is not clear that this approach offers an improvement over simply estimating the cost of capital for the firm as a whole.

    It is possible to construct a regulatory process in which there is no need to estimate a cost of debt (only a cost of equity would be required). This regulatory process would isolate and focus on the cash-flow stream to the equity of the regulated firm (summarised in equation \ref{eqn:f1}). However, it is not clear that this approach offers an improvement over simply estimating the cost of capital for the firm as a whole.

    \item In the context of a multi-year regulatory period, the single cost of capital parameter cannot be expressed as a weighted average of a similar parameter for equity and for debt. Again, this calls into question the value of separately estimating the cost of equity and the cost of debt and combining them to form an estimate of the cost of capital for the firm as a whole.

\end{itemize}

There is, at present, no known mechanism for achieving the fundamental objective of regulation (NPV=0) without estimating some form of the cost of capital. We cannot know how to improve our estimates of the cost of capital for regulatory purposes without a clear understanding of the underlying theory. In my view, clarifying and articulating that theory -- as set out in this article -- is at least a start in placing regulatory practice on a sound footing going forward.

\begin{appendices}
\section{Derivation of CAPM}\label{app:a}

This appendix derives the CAPM using the value functional and assumptions on the preferences of the representative investor.

Let's assume that the representative investor has mean-variance preferences. In other words, let's assume that the utility of the representative investor from a certain income $Y$ today (at time 0) and an uncertain income $X$ at time 1, is given by the following:
\begin{equation}
    U(X)=Y + \delta ( E[X]-\alpha Var[X])
\end{equation}
Here $\alpha$ is a parameter which reflects the degree of risk aversion of the investor, and $\delta$ is a parameter which reflects the rate of time preference between time 0 and time 1.

Let's suppose the set of all assets in the economy (the so-called `market portfolio') is represented in the payoff $M$ at time 1. This is the set of all assets which must be held by the investors in equilibrium.

Now consider a small change to the equilibrium that involves the addition of a small amount $\epsilon$ of an asset $X$. In equilibrium this asset must be priced in a way such that the purchase of a small amount does not change the utility of the representative investor holding the market portfolio. The utility from purchasing and the portfolio $M+ \epsilon X$ at time 0 and holding it to time 1 is as follows:
\begin{equation}
    U(M+\epsilon X) = \delta ( \E[M+\epsilon X]-\alpha Var[M+\epsilon X]) - \V(M+\epsilon X)
\end{equation}
The first order condition with respect to $\epsilon$ (setting $\epsilon=0$) is as follows:
\begin{equation}
    \V(X) = \delta \E[X] - 2 \alpha \delta Cov(M,X)
\end{equation}
The two parameters $\alpha$ and $\delta$ can be determined using the results: (a) when $X$ has a certain payoff, $\V(X)=\Delta_F \E(X)$, where $\Delta_F=RF^{-1}$ is the inverse of the risk-free cost of capital and (b) in the case where $X$ is a share of the market portfolio, $\V(M)= \Delta_M \E(M)$, where $\Delta_M=\R_{0 \to 1}(M_1)^{-1}$ is the inverse of the cost of capital for the market portfolio. This yields:
\begin{equation}
    \Delta_X= \Delta_F - (\Delta_F-\Delta_M) \beta(X)\label{eqn:capm1}
\end{equation}
Here $\Delta_X=\R_{0 \to 1}(X_1)^{-1}$ is the inverse of the cost of capital for the cash-flow $X_1$. Equation \ref{eqn:capm1} is the version of the CAPM in this context.

\section{The Fundamental Theorem}\label{app:b}

This appendix sets out a proof of the Fundamental Theorem of Regulation.

\begin{theorem}

Suppose that, at the start of a regulatory period, in year $t$, the regulator observes $RAB_t$ and chooses $X_{t+1}, X_{t+2}, \ldots, X_{t+T}$ and $RAB_{t+T}$, and the parameters $A_{t \to t+1}, A_{t \to t+2}, \ldots A_{t \to t+T}$ and $B_{t \to t+T}$ to satisfy the following two conditions:
\begin{align}
     \mathbb{V}_t(X_{t+1}, &X_{t+2}, \ldots, X_{t+T}+RAB_{t+T})  \nonumber\\
     &=\frac{\mathbb{E}_t(X_{t+1})}{A_{t \to t+1}} + \frac{\mathbb{E}_t(X_{t+1})}{A_{t \to t+1}} +\ldots + \frac{\mathbb{E}_t(X_{t+T})}{A_{t \to t+T}} +\frac{\mathbb{E}_t(RAB_{t+T})}{B_{t \to t+T}}\nonumber\\
     &= RAB_t\label{eqn:beren}
\end{align}
And, in addition, at some point in the future $s$ (e.g., at the end of the life of the firm), the regulator ensures that $RAB_s=\mathbb{V}_s(X_{s+1}, X_{s+2}, \ldots)$. Then, at time $t$, the asset base is equal to the present value of the future stream of cash-flows:
\begin{align}
     \mathbb{V}_t(X_{t+1}, X_{t+2}, \ldots) &= RAB_t
\end{align}
It follows that the firm achieves NPV=0.
\end{theorem}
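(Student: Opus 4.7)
The plan is to prove $\mathbb{V}_t(X_{t+1}, X_{t+2}, \ldots) = RAB_t$ by backward induction along the sequence of regulatory reset times $t, t+T, t+2T, \ldots$ ending at the terminal time $s$. The base case is the terminal condition supplied in the theorem: $RAB_s = \mathbb{V}_s(X_{s+1}, X_{s+2}, \ldots)$. The key tool, already derived in Section~\ref{sec:prelim} from the recursion and linearity of the value functional, is the telescoping identity
\[
\mathbb{V}_0(X_1, X_2, \ldots) = \mathbb{V}_0(X_1, \ldots, X_{T-1}, X_T + \mathbb{V}_T(X_{T+1}, X_{T+2}, \ldots)).
\]

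For the inductive step I would fix a reset time $t$ and assume the claim holds at the next reset time $t+T$, so that $RAB_{t+T} = \mathbb{V}_{t+T}(X_{t+T+1}, X_{t+T+2}, \ldots)$. Substituting this into the regulator's condition \ref{eqn:beren} at time $t$, and applying the telescoping identity with the roles of $0$ and $T$ played by $t$ and $t+T$, gives
\[
RAB_t = \mathbb{V}_t(X_{t+1}, \ldots, X_{t+T} + RAB_{t+T}) = \mathbb{V}_t(X_{t+1}, X_{t+2}, \ldots),
\]
which closes the induction. Specialising the result to $t=0$ yields $\mathbb{V}_0(X_1, X_2, \ldots) = RAB_0$, so the present value at time zero of the full stream of cash-flows received by the firm's investors exactly equals the initial capital $RAB_0$; hence the firm's NPV, being the present value of the stream net of the initial outlay, is zero.

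The main obstacle is organisational rather than analytical. Two points need to be made carefully: first, that the telescoping identity of Section~\ref{sec:prelim}, although stated there with distinguished times $0$ and $T$, is genuinely time-homogeneous and so applies with any starting time $t$ and any truncation horizon $t+T$; and second, that the parameters $A_{t \to t+k}$ and $B_{t \to t+T}$ appearing in the theorem statement play no role in the conclusion, since they are merely labels for the component cost-of-capital ratios that decompose the single equality $\mathbb{V}_t(\cdots) = RAB_t$, and only this overall equality enters the argument. Once these points are accepted, each inductive step reduces to a one-line substitution.
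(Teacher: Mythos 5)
Your proposal is correct and follows exactly the route the paper intends: the paper's own proof consists solely of the words ``By backward induction,'' and your argument supplies precisely that induction, using the terminal condition at $s$ as the base case and the telescoping identity $\mathbb{V}_t(X_{t+1},\ldots,X_{t+T}+\mathbb{V}_{t+T}(X_{t+T+1},\ldots))=\mathbb{V}_t(X_{t+1},X_{t+2},\ldots)$ for the inductive step. Your observation that the parameters $A$ and $B$ are inert and that only the overall equality $\mathbb{V}_t(\cdots)=RAB_t$ matters is also the right reading of the hypothesis.
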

\begin{proof}
By backward induction.
\end{proof}

\section{Five year regulatory period}\label{app:c}

Let's suppose that the regulator follows the following practice: At the start of each year of the regulatory period, that is at time $t-1$, ($t=1, \ldots T$), the value of the parameters $A_{t-1 \to t}$ and $B_{t-1 \to t}$ are determined and the regulated cash-flow allowance $X_t$ and the closing asset base $RAB_t$ is set to satisfy the following equation:
\begin{equation}
    \E_{t-1}(X_t)= A_{t-1 \to t} RAB_{t-1} - \E_{t-1}(RAB_t) \frac{A_{t-1 \to t}}{B_{t-1 \to t}}
\end{equation}
This can, of course, be re-written as the requirement that, at the start of each year of the regulatory period, the regulated cash-flow allowance $X_t$ and the closing asset base $RAB_t$ is set to satisfy the following:
\begin{equation}
    RAB_{t-1} = \frac{\E_{t-1}(X_t)}{A_{t-1 \to t}} + \frac{RAB_{t-1}}{B_{t-1 \to t}}
\end{equation}
Expanding this equation over the five-year (say) regulatory period starting at time $t=1$ we have the following:
\begin{align}
    RAB_0 &= \frac{\E_0(X_1)}{A_{0 \to 1}}+ \frac{\E_0}{B_{0 \to 1}}\left[\frac{\E_1(X_2)}{A_{1 \to 2}}\right]\nonumber\\
    &+ \frac{\E_0}{B_{0 \to 1}}\left[\frac{\E_1}{B_{1 \to 2}}\left[\frac{\E_2(X_3)}{A_{2 \to 3}}\right]\right]\nonumber\\
    &+ \frac{\E_0}{B_{0 \to 1}}\left[\frac{\E_1}{B_{1 \to 2}}\left[\frac{\E_2}{B_{2 \to 3}}\left[\frac{\E_3(X_4)}{A_{3 \to 4}}\right]\right]\right]\nonumber\\
    &+ \frac{\E_0}{B_{0 \to 1}}\left[\frac{\E_1}{B_{1 \to 2}}\left[\frac{\E_2}{B_{2 \to 3}}\left[\frac{\E_3}{B_{3 \to 4}}\left[\frac{\E_4(X_5)}{A_{4 \to 5}}\right]\right]\right]\right]\nonumber\\
    &+ \frac{\E_0}{B_{0 \to 1}}\left[\frac{\E_1}{B_{1 \to 2}}\left[\frac{\E_2}{B_{2 \to 3}}\left[\frac{\E_3}{B_{3 \to 4}}\left[\frac{\E_4(RAB_5)}{B_{4 \to 5}}\right]\right]\right]\right] \label{eqn:bigexp}
\end{align}

The question is how to choose the values of the cost-of-capital parameters $A_{0 \to 1}, A_{1 \to 2}, \ldots, A_{t-1 \to T}$ (and similarly for $B$) in order to satisfy equation \ref{eqn:beren}.

To begin, we will choose $A_{t-1 \to t}=\R_{{t-1} \to t}(X_t)$, $t=1, \ldots T$. The first term in equation \ref{eqn:bigexp} then becomes:
\begin{equation}
    \frac{\E_0(X_1)}{A_{0 \to 1}}=\frac{\E_0(X_1)}{\R_{0 \to 1}(X_1)}=\V_{0 \to 1}(X_1)
\end{equation}
As required. For the second term we choose:
\begin{equation}
B_{0 \to 1}=\R_{0 \to 1}(V_{1\to 2}(X_2))     
\end{equation}
Then the second term becomes:
\begin{equation}
    \frac{\E_0}{B_{0 \to 1}}\left[\frac{\E_1(X_2)}{A_{1 \to 2}}\right]=\frac{\E_0}{B_{0 \to 1}}\left[\V_{1 \to 2}(X_2)\right]=\V_{0 \to 2}(X_2)
\end{equation}
As required. For the third term we choose:
\begin{equation}
 B_{1 \to 2}= \frac{\R_{0 \to 1}(\V_{1\to 3}(X_3)) \R_{1 \to 2}(\V_{2\to 3}(X_3))}{ B_{0 \to 1}}
\end{equation}
The third term then becomes:
\begin{align}
    &\frac{\E_0}{B_{0 \to 1}}\left[\frac{\E_1}{B_{1 \to 2}}\left[\V_{2 \to 3}(X_3)\right]\right]\nonumber\\
    & =
    \frac{\E_0}{B_{0 \to 1}}\left[ \frac{B_{0 \to 1}}{\R_{0 \to 1}(\V_{1\to 3}(X_3))} \frac{\E_1(\V_{2 \to 3}(X_3))}{\R_{1 \to 2}(\V_{2\to 3}(X_3))}\right]\nonumber\\
    &=\frac{\E_0}{\R_{0 \to 1}(\V_{1\to 3}(X_3))}\left[  \V_{1\to 2}(\V_{2 \to 3}(X_3))\right]\nonumber\\
    &=\V_{0\to 3}(X_3)
\end{align}
For the fourth term we choose:
\begin{equation}
    B_{2 \to 3}= \frac{\R_{0 \to 1}(\V_{1\to 4}(X_4))\R_{1 \to 2}(\V_{2\to 4}(X_4)) \R_{2 \to 3}(\V_{3\to 4}(X_4))}{B_{0 \to 1} B_{1 \to 2} }
\end{equation}
(We will omit the algebra in this case for brevity). Similarly, for the fifth term we choose:
\begin{equation}
    B_{3 \to 4}= \frac{\R_{0 \to 1}(\V_{1\to 5}(X_5))\R_{1 \to 2}(\V_{2\to 5}(X_5))\R_{2 \to 3}(\V_{3\to 5}(X_5)) \R_{3 \to 4}(\V_{4\to 5}(X_5))}{B_{0 \to 1} B_{1 \to 2} B_{2 \to 3}}
\end{equation}
The fifth term then becomes:
\begin{align}
&\frac{\E_0}{B_{0 \to 1}}\left[\frac{\E_1}{B_{1 \to 2}}\left[\frac{\E_2}{B_{2 \to 3}}\left[\frac{\E_3}{B_{3 \to 4}}\left[\V_{4 \to 5}(X_5)\right]\right]\right]\right]\nonumber\\
    &=\frac{\E_0}{B_{0 \to 1}}\left[\frac{\E_1}{B_{1 \to 2}}\left[\frac{B_{0 \to 1} B_{1 \to 2}}{\R_{0 \to 1}(\V_{1\to 5}(X_5))\R_{1 \to 2}(\V_{2\to 5}(X_5))}\left[\frac{\E_2[\V_{3\to 5}(X_5)]}{\R_{2 \to 3}(\V_{3\to 5}(X_5))}\right]\right]\right]\nonumber\\
    &=\frac{\E_0}{B_{0 \to 1}}\left[\frac{B_{0 \to 1} }{\R_{0 \to 1}(\V_{1\to 5}(X_5))}\left[\frac{\E_1[\V_{2\to 5}(X_5)]}{\R_{1 \to 2}(\V_{2\to 5}(X_5))}\right]\right]\nonumber\\
    &=\frac{\E_0[\V_{1\to 5}(X_5)]}{\R_{0 \to 1}(\V_{1\to 5}(X_5))}\nonumber\\
    &=\V_{0\to 5}(X_5)
\end{align}
The choice of $B_{4 \to 5}$ follows similarly from the algebra.

\section{Changes in the debt portfolio}\label{app:d}

The total payoff at time one from the debt portfolio $X^D_1+\V^D_1$ is independent of any future changes in the debt portfolio. This can be easily demonstrated in a simple case:

Let's suppose that, at time zero, the firm holds amount $D_{0 \to 1}$ of instrument $I_1$ maturing at time 1, amount $D_{0 \to 2}$ of $I_2$ maturing at time 2, and amount $D_{0 \to 3}$ of $I_3$ maturing at time 3. The value of this portfolio at time zero is:
\begin{equation}
    \V^D_0=D_{0 \to 1} \V_{0 \to 1}(I_1) + D_{0 \to 2} \V_{0 \to 2}(I_2) + D_{0 \to 3} \V_{0 \to 3}(I_3)
\end{equation}
At time $t=1$ the debt instrument $I_1$ matures paying the amount $D_{0 \to 1} I_1$. In addition, the firm can adjust its portfolio of the other debt instruments. It can sell its existing portfolio $D_{0 \to 2}$ of $I_2$ and $D_{0 \to 3}$ of $I_3$ and purchase the amount $D_{1 \to 2}$ of $I_2$ and amount $D_{1 \to 3}$ of $I_3$. The net cash-flow at time $t=1$ is therefore:
\begin{equation}
    X^D_1=D_{0 \to 1} I_1 + (D_{0 \to 2} - D_{1 \to 2}) \V_{1 \to 2}(I_2) + (D_{0 \to 3}- D_{1 \to 3}) \V_{1 \to 3}(I_3)    
\end{equation}
Assuming there are no further changes in the portfolio, the firm receives a payout in the amount of $D_{1\to 2}$ of $I_2$ at time $t=2$ and amount $D_{1\to 3}$ of $I_3$ at time $t=3$. This payment stream has value at time $t=1$ of $D_{1\to 2}\V_{1\to 2}(I_2)+D_{1 \to 3} \V_{1 \to 3}(I_3)$. The total value of the debt payment stream at time zero is therefore independent of any subsequent changes in the portfolio:
\begin{align}
    \V_{0 \to 1}(X^D_1+\V^D_1) & = D_{0 \to 1} \V_{0 \to 1}(I_1)\nonumber\\
    &+ (D_{0 \to 2}-D_{1 \to 2})\V_{0 \to 1}( \V_{1 \to 2}(I_2))\nonumber\\
    &+ (D_{0 \to 3}-D_{1 \to 3}) \V_{0 \to 1}(\V_{1 \to 3}(I_3))\nonumber\\
    &+ D_{1 \to 2} \V_{0 \to 1}(\V_{1\to 2}(I_2)) + D_{1 \to 3}  \V_{0 \to 1}(\V_{1 \to 3}(I_3))\nonumber\\
    &= D_{0 \to 1} \V_{0 \to 1}(I_1) + D_{0 \to 2} \V_{0 \to 2}(I_2) + D_{0 \to 3} \V_{0 \to 3}(I_3)\nonumber\\
    &=\V^D_0
\end{align}

The general proof is as follows:
\begin{align}
    \V_{0 \to 1}(X^D_1+\V^D_1) &=\V_{0 \to 1}(X^D_1+\sum_{t=2} \mathbb{V}_{1 \to t}(D_{1 \to t} I_t) )\nonumber\\
    &=  D_{0 \to 1} \mathbb{V}_{0 \to 1}( I_1) + \sum_{t=2} D_{0 \to t} \V_{0 \to 1}(\V_{1\to t}(I_t)) )\nonumber\\
    &=\sum_{t=1} D_{0 \to t} \mathbb{V}_{0 \to t}(  I_t) = \V^D_0
\end{align}
Although the present value of the combined cash-flow $X^D_1+\mathbb{V}^D_1$ is independent of the future changes in the debt portfolio, this is not true for the cash-flows $X^D_1$ and $\mathbb{V}^D_1$ separately -- these depend on the details of the changes in the debt portfolio that occur at time 1.

\end{appendices}

\bibliography{references}

\end{document}